\documentclass[letterpaper, 10 pt, conference, english]{ieeeconf} % Comment this line out
\IEEEoverridecommandlockouts
\overrideIEEEmargins
\usepackage{epsfig}
\usepackage{subfigure}
\usepackage{amssymb,amsmath,amsfonts,layout,graphicx}
\usepackage{makeidx}
\usepackage{babel}
\usepackage{sublabel}
\usepackage{cases}
\usepackage{algorithm}
\usepackage{algorithmic}

%\pagestyle{plain}
%\makeatletter
%
%\theoremstyle{plain}
%\newtheorem{definition}{\protect\definitionname}
%\newtheorem{problem}{\protect\problemname}
%\newtheorem{lemma}{\protect\lemmaname}
%\newtheorem{assumption}{\protect\assumptionname}
%\newtheorem{proposition}{\protect\propositionname}
%\newtheorem{theorem}{\protect\theoremname}

%\usepackage{vmargin}
%\setpapersize{USletter}
%\setmarginsrb{19.1mm}{19.1mm}{19.1mm}{19.1mm}{0pt}{0pt}{0pt}{0pt}
%\makeatother
\usepackage
[
        letterpaper,% other options: a3paper, a5paper, etc
        left=1.91cm,
        right=1.91cm,
        top=1.91cm,
        bottom=2cm,
        % use vmargin=2cm to make vertical margins equal to 2cm.
        % us  hmargin=3cm to make horizontal margins equal to 3cm.
        % use margin=3cm to make all margins  equal to 3cm.
]
{geometry}

\newtheorem{proposition}{Proposition}

\newtheorem{definition}{Definition}
\newtheorem{corollary}{Corollary}
\newtheorem{remark}{Remark}

 \title{Uniform-Price Mechanism Design for a Large Population of \\ Dynamic Agents }
  \author{Sen Li, Wei Zhang, Jianming Lian, and Karanjit Kalsi
 %\thanks{This work was partly supported by the SMART@CAR consortium.}
 \thanks{S. Li, and W. Zhang are with the Department of Electrical and Computer Engineering, Ohio State University, Columbus, OH 43210. Email: \{li.2886, zhang.491\}@osu.edu}
  \thanks{J. Lian and K. Kalsi are with the Electricity Infrastructure Group, Pacific Northwest National Laboratory, Richland, WA 99354. Email: \{jianming.lian, karanjit.Kalsi\}@pnnl.gov}
 }

\begin{document}
\maketitle
\begin{abstract}
This paper focuses on the coordination of a large population of dynamic agents with private information over multiple periods. Each agent maximizes the individual utility, while the coordinator determines the market rule to achieve group objectives. The coordination problem is formulated as a dynamic mechanism design problem. A mechanism is proposed based on the competitive equilibrium of the large population game. We derive the conditions for the general nonlinear dynamic systems under which the proposed mechanism is incentive compatible and can implement the  social choice function in $\epsilon$-dominant strategy equilibrium. In addition, the proposed mechanism is applied to a linear quadratic problem with bounded parameters to show its efficacy.

\end{abstract}
\section{Introduction}
Mechanism design deals with the coordination of a group of self-interested agents to achieve group objectives \cite{mas1995microeconomic}. 
In a mechanism design problem, the coordinator needs to provide proper incentives to align the individual preferences with the social choice, i.e., the strategic behaviors of the self-interested  agents result in an outcome that achieves the desired group objectives. Such problem has been extensively studied by researchers in both engineering and economics \cite{zhangtruthful}, \cite{samadi2012advanced},     \cite{mcsherry2007mechanism}, \cite{nissim2012approximately}, \cite{kearns2014mechanism}, \cite{azevedo2013strategy}. 
Some of these works consider the coordination problem in dynamic setting
\cite{bergemann2010dynamic}, \cite{balandat2013dynamic}, \cite{athey2013efficient}, \cite{pavan2014dynamic}, \cite{pavan2009dynamic}, \cite{mierendorff2011optimal}.
While these existing works guarantee incentive compatibility and realize efficient resource allocation, they typically lead to discriminatory pricing schemes: the unit price of the mechanism is different for different agents. Such pricing scheme is  not applicable in many applications, including the existing electricity market. 

On the other hand, uniform-price mechanisms draw extensive research attentions due to its wide applicability in engineering problems. Many works study the design of uniform-price incentive mechanisms for price-taking agents \cite{limarket}, \cite{bitar2014deadline},  \cite{papadaskalopoulos2013decentralized} who do not anticipate the effect of his bid on the market clearing price. While intuitively the effect of individual agent actions on market price may vanish if the population grows large, the price-taker assumption requires a more rigorous justification.  In the case where the agents are not treated as price-takers, the interactions between agent decisions make it challenging to even analyze the game solutions of a given mechanism. One branch of works studies the decentralized convergence to the solution of some given mechanisms using mean-field approximations  \cite{kohansal2014price}, \cite{huang2012social}, \cite{grammatico2015mean}, \cite{parise2015constrained}, \cite{grammatico2014decentralized}, assuming that the agents are indifferent to a change in his utility up to a constant. While such method proposes a way to efficiently compute the game solutions, it does not involve group objectives. In contrast, another branch of works designs mechanisms that take group objectives into account \cite{chen2010two}, \cite{xudemand}, \cite{wu2012vehicle}.  In these works, proper pricing schemes are proposed to motivate agents to achieve or approximately achieve socially optimal resource allocations. However, most of these works focus on static mechanism design, where system dynamics are not involved. 

In this paper we consider the coordination of a large population of dynamic agents to achieve group objectives over multiple periods. Each individual agent is self-interested and strategically chooses control actions to maximize his utility. In the meanwhile, the coordinator needs to provide proper incentives (typically via pricing) for individual agents to achieve certain group objectives, i.e., maximize the social welfare. This problem poses several challenges: first, we focus on uniform-price mechanisms, where the agents face a single-valued unit price. No existing tool in mechanism design is directly applicable in this scenario. Second, to identify the optimal mechanism that maximizes the social welfare, a subproblem is to determine the solution of the game induced by a given mechanism. This problem is challenging, especially when large-scale coupled dynamic systems are involved.  Existing works mostly focus on mean field control \cite{kohansal2014price}, \cite{huang2012social}, \cite{grammatico2015mean}, \cite{parise2015constrained}, \cite{grammatico2014decentralized}, where the coordinator's control is typically fixed as the average of individual states or controls. In this case, since the higher level control is fixed (not optimized), the resulting solution can not guarantee to achieve the group objectives. Therefore, a more general framework that addresses other control scenarios is needed. 

The key contribution of this work lies in identifying a class of problems where uniform price mechanism can be developed to implement he social choice function in $\epsilon$-dominant strategy equilibrium. This class of problems captures a wide range of resource allocation problems where the coordinator allocates resource to a large population of dynamic agents to maximize the social welfare subject to a peak resource constraint. Despite its wide applicability, no existing tools in mechanism design literature can be directly used to design a uniform-price mechanism for this particular problem. In this paper, we propose a uniform-price mechanism to achieve the group objectives. We show that when the impact of individual bid on the market clearing price is bounded, the proposed mechanism is incentive compatible for general nonlinear dynamic systems. We also derived conditions under which the proposed mechanism can implement the social choice function in $\varepsilon$-dominant strategy equilibrium. Furthermore, the proposed mechanism is applied to a linear quadratic cases. We show that as long as the system parameters are properly bounded, the conditions for incentive compatibility can be verified, and the proposed mechanism maximizes the social welfare in $\epsilon$-dominant strategy equilibrium. 

The rest of the paper proceeds as follows. The coordination problem is formulated as a dynamic mechanism design problem in Section II. A uniform-price mechanism is proposed in Section III. In Section IV a special case is studied to show the efficacy of the proposed approach.
\section{Problem Formulation}
In this section we formulate the coordination problem as a mechanism design problem. Each agent is modeled as an individual utility maximizer who chooses a bid to optimize individual utility over a planning horizon. In the meanwhile, the coordinator needs to design the message space and market clearing rules (mechanism) to motivate the self-interested agents to achieve group objectives.

\subsection{Individual Utility Maximization}
Consider a group of $N$ agents with a planning horizon of length $K$. Let $x_k^i$ and $a_k^i$ denote the system state and control action of the $i$th agent at $k$th period, respectively. We consider the following discrete time dynamic systems with constraint:
\begin{align}
\label{dynamicsystems}
    \begin{cases}
	   x^i_{k+1}=f_i(x_k^i,a_k^i; \theta_k^i)  \\
	   a_k^i\in \Omega_k^i \\
	   x_k^i\in X_k^i
	\end{cases}
\end{align}
where $\theta_k^i\in \Theta^i$ denotes the private information of agent $i$, $\Omega_k^i$ and $X_k^i$ are convex and compact set, and $f_i$ is assumed to be continuous with respect to both parameters. In the mechanism design context, the private information is typically referred to as the agent type. In the individual decision problem, each agent acts strategically to maximize the individual utility, which is the valuation minus the payment. The user's valuation can be captured by a stage-additive valuation function $V_k^i:X_k^i\times \Omega_k^i \rightarrow R$, which is assumed to be continuously differentiable and strictly convex with respect to $a_k^i\in \Omega_k^i$. It evaluates the level of comfort the agent experiences when receiving an allocation decision $a_k^i\in \Omega_k^i$ at state $x_k^i \in X_k^i$. Let $p_k$ denote the unit price of the allocation at time $k$, then the utility of the $i$th agent can be denoted as follows:
\begin{equation}
\label{invidualutility}
	\sum_{k=1}^K U_k^i(x_k^i,a_k^i,p_k;\theta_k^i)=\sum_{k=1}^K V_k^i(x_k^i,a_k^i;\theta_k^i)-p_ka_k^i
\end{equation}
Since the agents are self-interested, each agent would choose control actions to maximize his utility subject to the dynamic constraint (\ref{dynamicsystems}). 
\begin{remark}
In the above formulation, the unit price $p_k$ is the same to different agents. This is called the non-discriminatory pricing. The non-discriminatory pricing scheme is compatible with many existing markets, and is easy to implement. On the other hand, many existing works are based on Vickrey-Clarke-Groves (VCG) mechanisms  \cite{mas1995microeconomic}, \cite{samadi2012advanced}, \cite{furuhata2015online}, which typically leads to discriminatory pricing. The uniform-price mechanism design problem (especially the dynamic case) is not well studied yet. 
\end{remark}

\subsection{The Mechanism Design Problem} 
Since the agents have private information, the coordinator asks each agent to submit a bid denoted as $r^i=(r_1^i,\ldots,r_L^i)$. This bid must be in the message space, $M^i$, specified by the coordinator. The coordinator then collects these bids and determines the market outcome accordingly, which consists of the allocation decision $a=(a_1,\ldots,a_K)$ and the unit price $p=(p_1,\ldots, p_K)$, where $a_k=(a_k^1,\ldots,a_k^N)$. This market clearing process maps the agent bids to the market outcome, which can be modeled as an outcome function $g:M^1\times \cdots \times M^N \rightarrow \Omega \times P$, where $\Omega$ denotes all feasible resource allocations and $P$ denotes all feasible unit prices.  A \emph{mechanism} consists of the message space and the outcome function, whose formal definition is given below:
\begin{definition}[\text{Mechanism} \cite{mas1995microeconomic}]
\label{mechanism}
A mechanism $\Gamma=(M,g(\cdot))$ consists of a collection of message spaces $M$ and an outcome function $g(\cdot)$,  where $M=M^1\times \cdots \times M^N$.
\end{definition}

Based on the definition, the outcome function can be written as $g(r)=(a^1,\ldots,a^N,p)$. For notation convenience, we denote $g_{k,a}^i(r)=a_k^i$ and $g_{k,p}(r)=p_k$, and let $g_a(r)=(a^1,\ldots,a^N)$. 
In our problem, when a mechanism is given, the utility function of each agent depends on the bid, and can be written as the following:
\begin{equation}
\label{inducedgame}
\sum_{k=1}^K U_k^i(x_k^i,g_{k,a}^i(r),g_{k,p}(r);\theta_k^i).
\end{equation}
As each agent is self-interested, he would choose his bid to maximizes the individual utility (\ref{inducedgame}). In the utility function (\ref{inducedgame}), since $r$ is the vector consisting of all agent bids, each agent's utility depends on the actions of other agents. Therefore, this utility maximization problem is a game induced by the mechanism $\Gamma$, which can be denoted as $G(\Gamma)$.  
For a game problem, there are several solution concepts, including Nash equilibrium, Bayesian Nash equilibrium, dominant strategy equilibrium, etc \cite{mas1995microeconomic}. In this paper we choose the $\epsilon$-dominant strategy equilibrium solution concept, which assumes that each agent is indifferent to a change in his utility up to a constant $\epsilon$. In reality, a utility of $0.99\$$ may have no difference from a utility of $1\$$ to an agent. Therefore, this assumption is reasonable in many practical applications. 
If we let $\Lambda_k^i(r)=(g_{k,a}^i(r),g_{k,p}(r))$, then the stage utility of the $i$th agent at time $k$ can be written as a function of $\Lambda_k^i(r)$ only (neglecting the dependence on initial state). Using these notations, the formal definition of the $\epsilon$-dominant strategy equilibrium is given as below:
\begin{definition}
Given a game $G(\Gamma)$, a bidding collection $r^*=(r^{1*},\ldots,r^{N*})$ is an $\epsilon$-dominant strategy equilibrium of the game if for any $i=1,\ldots,N$, we have:
\begin{equation}
\sum_{k=1}^K U_k^i(\Lambda_k^i(r^{i*},r^{-i}); \theta_k^i)\geq  \sum_{k=1}^K U_k^i(\Lambda_k^i(r^i,r^{-i}); \theta_k^i)-\epsilon
\end{equation}
for all $r^i\in M^i$ and all $r^{-i}\in M^{-i}$, where $M^{-i}=M^1\times\cdots\times M^{i-1}\times M^{i+1} \times \cdots \times M^N$.
\end{definition}
In the equilibrium solution, it is to the agent's best benefit to follow the equilibrium strategy regardless of other agent's actions. Note that the equilibrium strategy depends on user types, the individual decision making process can be captured by a bidding strategy $m_{\Gamma}^i:\Theta^i\rightarrow M^i$. In this notaion, $m_{\Gamma}(\theta)=(m_{\Gamma}^1,(\theta^1)\ldots,m_{\Gamma}^N(\theta^N))$ is the game solution to $G(\Gamma)$.
\vspace{0.1cm}
\begin{remark}
The dominant strategy equilibrium solution is a rather robust solution concept. In dominant strategy equilibrium, each agent's best decision is to follow the equilibrium strategy regardless of the actions of other agents. Therefore, when an individual agent makes a bidding decision, he does not need to anticipate the control actions of other agents. However, it is well-known that in general, VCG mechanism is the unique mechanism that induces dominant strategy equilibrium \cite{mas1995microeconomic}, which does not guarantee uniform unit price. Therefore, different from many existing works, we adopt the $\epsilon$-dominant strategy equilibrium. In this solution concept, a uniform-price mechanism (different from VCG) is proposed to implement the desired social choice function for large-population agents. 
\end{remark}

The goal of this paper is to find the mechanism that achieves the group objective. The group objective can be encoded in a social choice function $\phi:\Theta^1\times \cdots \times \Theta^N \rightarrow \Omega$ that maps the agent type to a feasible resource allocation. Here we consider the social choice function that maximizes the social welfare subject to a total resource constraint:
\begin{align}
\label{socialchoice}
\phi(\theta)=\arg &\max_{a} \sum_{i=1}^N \sum_{k=1}^K V_k^i(x_k^i,a_k^i;\theta_k^i)-\sum_{k=1}^K \sigma_k\bigg(\sum_{i=1}^N a_k^i\bigg) \nonumber \\
&\text{s.t. }\begin{cases}
x^i_{k+1}=f_i(x_k^i,a_k^i; \theta_k^i)  \\
a_k^i\in \Omega_k^i  \\
\sum_{i=1}^N a_k^i\leq D_k \quad \forall k \nonumber
\end{cases} 
\end{align}
In the above definition, $D_k$ is the total resource constraint, and $\sigma_k(\cdot)$ is a convex and continuously differentiable function denoting the cost for the coordinator to procure certain amount of resources at time $k$. 
The value of the social choice function indicates the socially desired resource allocation. On the other hand, the resource allocation is a function of the user bids, which depends on the mechanism. Therefore, to achieve the group objective, the coordinator needs to find a mechanism whose resulting allocation plan coincides with the value of the social choice function.
\begin{definition}[\text{Implementation} \cite{mas1995microeconomic}]
\label{implementation}
The mechanism $\Gamma$ implements the social choice function $\phi(\cdot)$ if there exists an equilibrium strategy profile $(m^{1}_{\Gamma}(\cdot), \ldots, m^{N}_{\Gamma}(\cdot))$ of the game $G(\Gamma)$ such that 
\begin{equation}
g_{a}(m^1_{\Gamma}(\theta_1), \ldots, m^{N}_{\Gamma}(\theta_N))=\phi(\theta), \quad \forall \theta\in \Theta.
\label{impl}
\end{equation}
\end{definition}
\vspace{0.2cm}
where $\Theta=\Theta^1\times \cdots \times \Theta^N$.

To this point, the problem of this paper can be stated as follows:
design a uniform-price mechanism to implement the social choice function $\phi(\cdot)$ in $\epsilon$-dominant strategy equilibrium..

\section{The Uniform-price Mechanism}
In this section, we first propose a mechanism based on the competitive equilibrium of the market. Then, we show that the proposed mechanism can implement the social choice function when each agent strategically chooses bid to maximize individual utilities. Due to revelation principle, we focus on the direct mechanism, where the message space is the space of agent types, i.e., $M^i=\Theta^i$ for $\forall i$.

\subsection{The Proposed Mechanism}
To introduce our proposed mechanism, let us first consider a price-response problem, where each agent  takes price as given and chooses optimal control $a^i$ to maximize the individual utility. In the price-response problem, the control strategy of each agent depends on the price. Therefore, we have the following:
\begin{align}
\label{priceresponse}
\mu^i(p,r^i)=\arg &\max_{a^i} \sum_{k=1}^K U_k^i(x_k^i,a_k^i,p_k;r_k^i)\\
&\text{s.t. }\begin{cases}
x^i_{k+1}=f_i(x_k^i,a_k^i; r_k^i)  \\
a_k^i\in \Omega_k^i. \nonumber
\end{cases} 
\end{align}
where $u^i$ is assumed to be continuous with respect to $p$. In the context of economics, the price-allocation pair $(p,\mu^1(p,r^i),\ldots,\mu^N(p,r^i))$ is defined as the competitive equilibrium \cite{mas1995microeconomic}. Here we propose a mechanism $\Gamma_c$ based on the competitive equilibrium concept to solve the mechanism design problem of this paper.
\begin{definition}
\label{mechanism}
The proposed mechanism $\Gamma_c$ is a direct mechanism with the following outcome function:
\begin{align}
\begin{cases}
g_p(r)=\arg \max_{p} \sum_{i=1}^N \sum_{k=1}^K V_k^i(x_k^{i*},a_k^{i*};r_k^i)- \\
\quad \quad \quad \quad \quad \quad \quad \sum_{k=1}^K \sigma_k\big(\sum_{i=1}^N a_k^{i*}\big)  \nonumber \\
\quad \quad \quad \quad \quad  \text{s.t. }\begin{cases}
x^{i*}_{k+1}=f_i(x_k^{i*},a_k^{i*}; r_k^i)  \\
a^{i*}=\mu^i(p,r^i) \\
\sum_{i=1}^N a_k^{i*}\leq D_k, \quad \forall k \\
a_k^{i*}\in \Omega_k^i   \\
\end{cases} \\
g_a^i(r)=\mu^i(g_p(r),r^i).
\end{cases}
\end{align}
\end{definition}
\vspace{0.1cm}
where $g_a^i(r)=(g_{1,a}^i(r),\ldots,g_{K,a}^i(r))$ and $g_p(r)=(g_{1,p}(r),\ldots,g_{K,p}(r))$. In the above definition, the utility function is a continuous function with respect to $p$ over a compact set. Therefore, $g_p(r)$ exists, and the proposed mechanism is well-defined. 
\vspace{0.2cm}

\begin{remark}
In this paper, we assume that $\mu^i$ is continuous with respect to $p$, which establishes the existence of $g_p(r)$. It is well known that the solution to a strictly convex parametric quadratic programming problem is continuous with respect to its parameters \cite{hempel2015inverse}. Therefore, this continuity assumption holds when $U_k^i$ is quadratic and $f_i$ is linear. We believe this assumption also holds for more general systems, but the derivation of the detailed conditions that guarantee the continuity of $\mu^i$ is out of the scope of this paper. 
\end{remark}

\begin{remark}
Note that although the mechanism is defined based on the price-taker game (\ref{priceresponse}), we do not assume that agents are price-taker. In our problem, each agent will anticipate the impact of his bid on the market clearing price, and makes bidding decisions accordingly. This is significantly different from the works based on competitive equilibrium \cite{limarket}, \cite{chen2010two}, \cite{xudemand}. On the other hand, we conjecture that the impact of the bids of an individual agent on the market price will diminish as the population size grows. In this case the agent behavior should be very similar to price-taker agents when the game has a large population. The rest of this section provides rigorous justification for this conjecture. 
\end{remark}

\subsection{Properties of the Proposed Mechanism}
This subsection discusses some properties of the proposed mechanism. 

In mechanism design problems, the coordinator typically wants to design a mechanism to induce the truth-telling solution ($r^i=\theta^i$). If truth-telling is an equilibrium strategy of the game induced by a direct mechanism, we say this mechanism is incentive compatible. The formal definition of incentive compatibility is as follows:
\begin{definition}[\text{Incentive Compatibility} \cite{mas1995microeconomic}]
A direct mechanism $\Gamma$ is incentive compatible in $\epsilon$-dominant strategy equilibrium if for any $i=1,\ldots,N$, we have:
\begin{equation}
\sum_{k=1}^K U_k^i(\Lambda_k^i(\theta^{i},r^{-i}); \theta_k^i)\geq  \sum_{k=1}^K U_k^i(\Lambda_k^i(r^i,r^{-i}); \theta_k^i)-\epsilon
\end{equation}
for all $r^i\in M^i$ and all $r^{-i}\in M^{-i}$.
\end{definition}
\vspace{0.1cm}

Under certain conditions, we can show that the proposed mechanism is incentive compatible in $\epsilon$-dominant strategy equilibrium. To present our result, we first introduce some regularity conditions.
\begin{definition}
We say a function $\varphi:R^k\rightarrow R$ is Lipschitz continuous with constant $C$ in $L_{\infty}$ norm for $x\in X$, if there exists $C>0$ such that:
\begin{equation}
\label{Lipschitz}
|\varphi(x_1)-\varphi(x_2)|\leq C||x_1-x_2||_{\infty}, \forall x_1\in X, x_2\in X
\end{equation}
\end{definition}
\vspace{0.2cm}

Throughout the rest of the paper, whenever we say a function is Lipschitz continuous with constant $C$, we mean the function is Lipschitz continuous with constant $C$ in $L_{\infty}$ norm for its parameters in the corresponding space. 

The incentive compatibility result of the proposed mechanism can be summarized as the following proposition:
\begin{proposition}
\label{incentivecomp}
Assume that $U_k^i$ is Lipschitz continuous with respect to both $a_k^i$ and $p_k^i$ with constant $C_1$ and $C_2$, respectively. Assume $\mu^i$ is Lipchitz continuous with respect to $p$ with constant $C_3$. Furthermore, assume that there exists $\epsilon_1>0$ for the proposed mechanism $\Gamma_c$ such that:
\begin{equation}
\label{condition1}
|g_{k,p}(r^i,r^{-i})-g_{k,p}(\theta^i,r^{-i})|\leq \epsilon_1,\quad \forall r^i\in M^i, r^{-i}\in M^{-i},
\end{equation}
then $\Gamma_c$ is incentive compatible in $\epsilon$-dominant strategy equilibrium where $\epsilon=K(C_1C_3+C_2)\epsilon_1$.
\end{proposition}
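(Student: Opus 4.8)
The plan is to bound, for a fixed agent $i$ with true type $\theta^i$, the gain from any unilateral deviation $r^i$ against any opponent profile $r^{-i}$, exploiting the fact that by construction of $\Gamma_c$ truthful reporting already makes agent $i$ a best responder to the price that gets cleared. Write $p^*=g_p(\theta^i,r^{-i})$ for the price cleared under truth-telling and $\tilde p=g_p(r^i,r^{-i})$ for the price cleared under the deviation, so that hypothesis \eqref{condition1} reads $\|\tilde p-p^*\|_\infty\le\epsilon_1$. Since $g_a^i(r)=\mu^i(g_p(r),r^i)$ and, by \eqref{priceresponse}, $\mu^i(\cdot,\theta^i)$ is exactly agent $i$'s utility-maximizing response at its true type, the truth-telling payoff equals $\max_{a^i}\sum_{k=1}^K U_k^i(x_k^i,a_k^i,p_k^*;\theta_k^i)$, the maximum ranging over control sequences admissible in \eqref{priceresponse}; the deviation payoff equals $\sum_{k=1}^K U_k^i(\hat x_k^i,\hat a_k^i,\tilde p_k;\theta_k^i)$ with $\hat a^i=\mu^i(\tilde p,r^i)$. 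The goal is to show the latter exceeds the former by at most $K(C_1C_3+C_2)\epsilon_1$.

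I would do this by a three-term estimate. (i) Swap the price $\tilde p$ for $p^*$ in the deviation payoff: by Lipschitz continuity of $U_k^i$ in $p_k$ with constant $C_2$ together with \eqref{condition1}, this costs at most $C_2\epsilon_1$ per stage, hence $KC_2\epsilon_1$ over the horizon. (ii) Swap the deviation allocation $\mu^i(\tilde p,r^i)$ for $\mu^i(p^*,r^i)$, the response of the \emph{same} reported type to the truthfully cleared price: Lipschitz continuity of $\mu^i$ in $p$ with constant $C_3$ gives $\|\mu^i(\tilde p,r^i)-\mu^i(p^*,r^i)\|_\infty\le C_3\epsilon_1$, and then Lipschitz continuity of $U_k^i$ in $a_k^i$ with constant $C_1$ turns this into a stagewise error $C_1C_3\epsilon_1$, hence $KC_1C_3\epsilon_1$ over the horizon; this is the step that creates the $C_1C_3$ term. (iii) Observe that $\mu^i(p^*,r^i)$ is a control sequence admissible in \eqref{priceresponse} (the admissible set there is $\{a_k^i\in\Omega_k^i\}$ and does not depend on the type argument), so $\sum_{k=1}^K U_k^i(\cdot,\mu_k^i(p^*,r^i),p_k^*;\theta_k^i)$ is at most the maximum, i.e. the truth-telling payoff. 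Chaining (i)--(iii) yields deviation payoff $\le$ truth-telling payoff $+\,K(C_1C_3+C_2)\epsilon_1$, which is incentive compatibility in $\epsilon$-dominant strategy equilibrium with $\epsilon=K(C_1C_3+C_2)\epsilon_1$.

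The delicate points, where I expect the real work to sit, are two. First, $U_k^i$ depends on the state $x_k^i$, which depends on the whole past of the control sequence, so the allocation swap in step (ii) perturbs the trajectory as well as the direct argument; the constant $C_1$ must therefore be read as controlling the variation of $U_k^i$ under a change of the control sequence through both channels, i.e. one works with the Lipschitz modulus (in $L_\infty$) of the composite map sending a control sequence $a^i$ to $\sum_k U_k^i(x_k^i[a^i],a_k^i,p_k;\theta_k^i)$, where $x_k^i[a^i]$ is the trajectory $a^i$ generates through the agent's own dynamics; this is finite under the continuity and compactness already in force. Second, the comparison must be routed through $\mu^i(p^*,r^i)$ (truthful price, misreported type) and not directly through the truthful allocation $\mu^i(p^*,\theta^i)$: only the former is tied to the actual deviation allocation $\mu^i(\tilde p,r^i)$ by \eqref{condition1} and the $C_3$-Lipschitz bound, whereas a change in the type argument of $\mu^i$ is not controlled at all, which is exactly why the hypothesis must bound the \emph{price} impact of a report change and why $\epsilon$ comes out linear in $\epsilon_1$. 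Beyond \eqref{condition1}, nothing about the welfare maximization defining $g_p$ is used, and the horizon contributes the factor $K$ simply by summing $K$ stagewise errors.
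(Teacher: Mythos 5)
Your proposal is correct and takes essentially the same route as the paper's own proof: the paper likewise combines the optimality of the price response $\mu^i(p^*,\cdot)$ at the cleared price under truthful evaluation with a price-swap estimate (through both the price argument, costing $C_2\epsilon_1$ per stage, and the allocation via the $C_3$-Lipschitz map $\mu^i$, costing $C_1C_3\epsilon_1$ per stage), then specializes the bid to the true type. Your explicit remark that $C_1$ must control the utility variation through the state trajectory as well as the direct control argument is a point the paper leaves implicit, but the decomposition and the resulting bound $\epsilon=K(C_1C_3+C_2)\epsilon_1$ are the same.
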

\begin{proof}
Let $r^i$ denote the bids of the $i$th agent. Let $p_k^*=g_{k,p}(r)$. According to the definition of the proposed mechanism $\Gamma_c$, for each agent $i$, we have the following:
\begin{equation}
\label{proof1eq}
\sum_{k=1}^K U_k^i\left(\mu_k^i(p_k^*,r^i),p_k^*;r^i\right)\geq \sum_{k=1}^K U_k^i\left(\mu_k^i(p_k^*,\tilde{r}^i),p_k^*;r^i\right) 
\end{equation}
for all $\tilde{r}^i\in M^i$. Due to the Lipschitz continuity and the assumption (\ref{condition1}) of the proposition, it can be verified that the following inequality holds:
\begin{align}
\label{proof2eq}
&\sum_{k=1}^K U_k^i\left(\mu_k^i(g_p(r^i,r^{-i}),\tilde{r}^i),g_{k,p}(r^i,r^{-i});r^i\right)+\epsilon\geq \nonumber \\
& \sum_{k=1}^K U_k^i\left(\mu_k^i(g_p(\tilde{r}^i,r^{-i}),\tilde{r}^i),g_{k,p}(\tilde{r}^i,r^{-i});r^i\right) 
\end{align}
for all $\tilde{r}^i\in M^i$ and $r^{-i}\in M^{-i}$, where $\epsilon=K(C_1C_3+C_2)\epsilon_1$. Equation (\ref{proof1eq}) and (\ref{proof2eq}) together indicate the following:
\begin{align}
\label{proof3eq}
&\sum_{k=1}^K U_k^i\left(\mu_k^i(g_p(r^i,r^{-i}),r^i),g_{k,p}(r^i,r^{-i});r^i\right)\geq \nonumber \\
& \sum_{k=1}^K U_k^i\left(\mu_k^i(g_p(\tilde{r}^i,r^{-i}),\tilde{r}^i),g_{k,p}(\tilde{r}^i,r^{-i});r^i\right)-\epsilon 
\end{align}
for all $\tilde{r}^i\in M^i$ and $r^{-i}\in M^{-i}$. Therefore, the proposed mechanism guarantees that the agent bid $r$ satisfies the above condition (\ref{proof3eq}). 

On the other hand, the $i$th agent seeks to find a bid $r_i$ such that:
\begin{align}
\label{proof4eq}
&\sum_{k=1}^K U_k^i\left(\mu_k^i(g_p(r^i,r^{-i}),\theta^i),g_{k,p}(r^i,r^{-i});\theta^i\right)\geq \nonumber \\
& \sum_{k=1}^K U_k^i\left(\mu_k^i(g_p(\tilde{r}^i,r^{-i}),\tilde{r}^i),g_{k,p}(\tilde{r}^i,r^{-i});\theta^i\right)-\epsilon 
\end{align}
for all $\tilde{r}^i\in M^i$ and $r^{-i}\in M^{-i}$. Letting $r_i=\theta_i$ exactly satisfies this objective. This completes the proof. 
\end{proof}
\vspace{0.2cm}

Proposition \ref{incentivecomp} provides conditions under which the proposed mechanism is incentive compatible. Aside from Lipschitz continuity, the key condition is (\ref{condition1}): the bid of a particular agent will affect the market clearing price by at most $\epsilon_1$. While the result of Proposition \ref{incentivecomp} works for general nonlinear dynamic systems, how to verify condition (\ref{condition1}) depends on the particular application. In the next section we will show how this condition can be verified for linear quadratic problems. The verification of this condition for more general systems is left for future work.

%\begin{remark}
%We comment that in general it is rather challenging to characterize the solution to a large-scale dynamic game. Our result borrows some principle ideas from mean-field approximations  \cite{kohansal2014price}, \cite{huang2012social}, \cite{grammatico2015mean}, \cite{parise2015constrained}, \cite{grammatico2014decentralized}. However, these works are mostly based on mean-field control, which is not optimal for the desired social choice. In our problem, the optimal mechanism is much more complicated than mean-field control, and therefore more difficult to analyze. 
%\end{remark}

In our paper, since the mechanism is designed based on the competitive equilibrium, which deviates from the true agent behavior, incentive compatibility does not necessarily guarantee the group objective can be achieved. To account for this, we first introduce the following definition before proceeding to the key result of this paper. 
\begin{definition}
Given a resource allocation $a$, we say $a$ is uniform-price implementable if there exists a price $p\in P$ such that $a^i=\mu^i(p,\theta^i)$ for all $i=1,\ldots,N$.
\end{definition}
Based on the definition, if a resource allocation $a$ is uniform-price implementable by $p$, then $(a,p)$ is a competitive equilibrium of the game. 
Now we can present the key result of this paper:
\begin{proposition}
\label{keyresult}
Assume that the social choice function $\phi(\theta)$ is uniform-price implementable, and the proposed mechanism $\Gamma_c$ is incentive compatible in $\epsilon$-dominant strategy equilibrium, then $\Gamma_c$ can implement the social choice function $\phi(\theta)$ in $\epsilon$-dominant strategy equilibrium.
\end{proposition}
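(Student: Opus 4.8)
The plan is to use the two hypotheses for two separate purposes: incentive compatibility fixes \emph{which} strategy profile we should exhibit, and uniform-price implementability forces the mechanism's outcome at that profile to equal $\phi(\theta)$. Concretely, since $\Gamma_c$ is incentive compatible in $\epsilon$-dominant strategy equilibrium, the definition of incentive compatibility says precisely that the truthful bid $r^i=\theta^i$ satisfies the $\epsilon$-dominant strategy equilibrium inequality for every agent $i$; hence the bidding strategy $m^i_{\Gamma_c}(\theta^i):=\theta^i$ defines an $\epsilon$-dominant strategy equilibrium of $G(\Gamma_c)$. By Definition~\ref{implementation}, it then remains only to verify the identity $g_a(\theta^1,\ldots,\theta^N)=\phi(\theta)$ for all $\theta\in\Theta$, i.e. that evaluating $\Gamma_c$ at the truthful profile produces the socially optimal allocation.

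For this, I would evaluate the mechanism at $r=\theta$ and compare the inner optimization defining $g_p(\theta)$ with the social-choice program. By the uniform-price implementability hypothesis there is a price $\bar p\in P$ with $\mu^i(\bar p,\theta^i)=\phi^i(\theta)$ for every $i$. Since $\phi(\theta)$ is by construction feasible for the social-choice program (its components lie in $\Omega^i_k$, the associated states satisfy the dynamics, and $\sum_i\phi^i_k(\theta)\le D_k$), the price $\bar p$ is a feasible point of the constrained maximization defining $g_p(\theta)$. Conversely, every feasible $p$ in that maximization induces the allocation $(\mu^1(p,\theta^1),\ldots,\mu^N(p,\theta^N))$, which — because each $\mu^i(p,\theta^i)$ is by definition a maximizer over $\Omega^i_k$ subject to the system dynamics, and the peak constraint $\sum_i a^{i*}_k\le D_k$ is imposed explicitly in $g_p$ — is itself feasible for the social-choice program. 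Hence the value of the welfare objective at any feasible $p$ cannot exceed the optimal value of the social-choice program, and that optimal value is attained at $\phi(\theta)=(\mu^1(\bar p,\theta^1),\ldots,\mu^N(\bar p,\theta^N))$, i.e. at $p=\bar p$. Therefore $\bar p$ attains the maximum in the definition of $g_p(\theta)$, so $g_p(\theta)$ may be taken equal to $\bar p$, and consequently $g^i_a(\theta)=\mu^i(g_p(\theta),\theta^i)=\mu^i(\bar p,\theta^i)=\phi^i(\theta)$, giving $g_a(\theta)=\phi(\theta)$ and completing the argument.

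The step I expect to require the most care is the non-uniqueness of the $\arg\max$ in the definition of $g_p$: a priori several prices could attain the optimal welfare value while inducing different allocations, so ``$g_p(\theta)$ may be taken equal to $\bar p$'' needs justification. I would resolve this in one of two ways. The clean route is to note that the strict convexity (resp.\ strict concavity) assumed on the stage valuations $V^i_k$, together with convexity of each $\sigma_k$, makes the social-welfare objective strictly concave in $a$, so $\phi(\theta)$ is its \emph{unique} feasible maximizer; then any optimal price $p$ in the $g_p$ program must satisfy $(\mu^1(p,\theta^1),\ldots,\mu^N(p,\theta^N))=\phi(\theta)$, and the conclusion $g_a(\theta)=\phi(\theta)$ holds for every selection of the $\arg\max$. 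Alternatively, one reads the $\arg\max$ in Definition~\ref{mechanism} as a fixed selection rule, of which $\bar p$ is an admissible choice, so the mechanism can be realized to output $\phi(\theta)$. A secondary point worth one sentence is verifying that, after substituting $a^{i*}=\mu^i(p,\theta^i)$, the constraint set of the $g_p$ program is contained in that of the social-choice program (and contains $\phi(\theta)$ via $\bar p$); this is exactly what makes the two maxima coincide and is immediate from the definition of $\mu^i$.
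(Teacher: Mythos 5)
Your proof is correct and follows essentially the same route as the paper's: a two-sided comparison in which the welfare of the mechanism's outcome at the truthful profile is at least that of $\phi(\theta)$ (because $\phi(\theta)$ is uniform-price implementable, hence feasible for the program defining $g_p$) and at most that of $\phi(\theta)$ (because the social-choice/team problem upper-bounds any welfare attainable by $\Gamma_c$). You additionally spell out two points the paper leaves implicit --- that incentive compatibility pins truth-telling as the equilibrium profile at which implementation must be checked, and that strict curvature of the welfare objective (or a fixed selection of the $\arg\max$ in $g_p$) is needed to pass from equality of welfare values to equality of allocations --- which tightens rather than changes the argument.
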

\begin{proof}
On one hand, based on Definition \ref{mechanism}, the proposed mechanism $\Gamma_c$ renders an optimal resource allocation among all uniform-price implementable allocations. Therefore, the resulting social welfare is no less than that of the desired social choice function $\phi(\theta)$.

On the other hand, the social choice function is the team problem for the proposed mechanism $\Gamma$ \cite{basar1995dynamic}, \cite{bacsar1980team}. Therefore, it provides an upper bound for attainable social welfare for $\Gamma$. This completes the proof. 
\end{proof}

\section{The Linear Quadratic Case}
In this section, we derive the conditions under which the proposed mechanism can achieve the group objective in linear quadratic problems.

Based on Proposition \ref{keyresult}, to achieve the group objective, the following conditions need to be verified: first, the social choice function should be uniform-price implementable; second, the propose mechanism $\Gamma_c$ needs to be incentive compatible. In the rest of this section, we will introduce the problem setup and check these two conditions in this setup. 

\subsection{Problem Setup}
Let us consider the following linear scalar system for the $i$th agent:
\begin{equation}
\label{linearmodel}
x_{k+1}^i=A^ix_k^i+B_k^ia_k^i
\end{equation}
where $x_k^i\in R$ denotes the state and $a_k^i\in R$ is the input. We assume that $0<\underline{A}\leq A_i \leq \bar{A}$ and $\underline{B}\leq B_i \leq \bar{B}<0$ for all $i=1,\ldots,N$. This linear model captures various dynamic assets, including batteries, PEVs, thermostatically controlled loads (with proper control strategy), among others. 

Let $(d_1^i,\ldots,d_K^i)$ be the desired state trajectory for agent $i$. Consider a valuation function of quadratic form that penalizes the deviation from this desired sequence. The individual utility can be written as follows:
\begin{equation}
\label{lqutility}
\sum_{k=1}^K U_k^i(x_k^i,a_k^i,p_k;\theta_k^i)=\sum_{k=1}^K \beta_k^i(x_k^i-d_k^i)^2-p_ka_k^i
\end{equation}
where $\theta_k^i=(A^i,B^i,\beta^i,d^i)$. Note that $B^i$ denotes a vector, and the same applies to other time-varying parameters.  

In this setup, the social choice function has the following form:
\begin{align}
\label{socialchoice1}
\phi_{LQ}(\theta)=\arg &\max_{a} \sum_{i=1}^N \sum_{k=1}^K \beta_k^i(x_k^i-d_k^i)^2-\sum_{k=1}^K \sigma_k\big(\sum_{i=1}^N a_k^i\big) \\
&\text{s.t. }\begin{cases}
x_{k+1}^i=A^ix_k^i+B_k^ia_k^i  \\
\sum_{i=1}^N a_k^i\leq D_k \quad \forall k \nonumber
\end{cases} 
\end{align}
where $\beta_k^i<0$ for all $i$ and $k$. For sake of analysis, we assume that the function $\sigma_k$ has a linear structure, i.e., $\sigma_k(\sum_{i=1}^N a_k^i)=p_k^w(\sum_{i=1}^N a_k^i)$. In electricity market, $p_k^w$ is the wholesale energy price.

\subsection{Uniform-price Implementability}
In this subsection we show that the social choice function $\phi_{LQ}(\cdot)$ is uniform-price implementable. 
\begin{proposition}
\label{uniformim}
The social choice function $\phi_{LQ}(\cdot)$ is uniform-price implementable, i.e., for any report $r$, there exists a price $p^c$ such that $\phi_{LQ}^i(r)=\mu^i(p_c,r^i)$ for $\forall i=1,\ldots,N$, where $\phi_{LQ}^i(r)$ is  the allocation to the $i$th agent. 
\end{proposition}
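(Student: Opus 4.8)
The plan is to produce the supporting price explicitly by Lagrangian duality, relaxing only the peak (coupling) constraints. First I would record that, for a fixed report $r$, problem~(\ref{socialchoice1}) is a convex program: since $\beta_k^i<0$ and $\sigma_k$ is linear, and since the state trajectory $(x_1^i,\dots,x_K^i)$ is an affine function of $a^i$ through the linear recursion~(\ref{linearmodel}), the objective is concave in the stacked vector $(a^1,\dots,a^N)$; the dynamics are linear equalities and the peak constraints $\sum_i a_k^i\le D_k$ are linear inequalities. Because all constraints are affine and the feasible set is nonempty (the $\arg\max$ in~(\ref{socialchoice1}) is well-defined), strong duality holds and an optimal dual vector is attained, so I need not invoke Slater's condition.

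Next I would eliminate the dynamics by substitution and form the partial Lagrangian that dualizes only the $K$ coupling constraints with multipliers $\lambda_k\ge 0$:
\begin{align*}
L(a,\lambda)=\sum_{i=1}^N\Big[\sum_{k=1}^K \beta_k^i(x_k^i-d_k^i)^2-\sum_{k=1}^K (p_k^w+\lambda_k)\,a_k^i\Big]+\sum_{k=1}^K\lambda_k D_k ,
\end{align*}
where $x_k^i$ denotes the state generated from $a^i$ by~(\ref{linearmodel}). The key structural observation is that, for fixed $\lambda$, $L(\cdot,\lambda)$ is separable across agents, and its $i$-th block is precisely agent $i$'s price-response objective $\sum_{k=1}^K U_k^i(x_k^i,a_k^i,p_k;r_k^i)$ evaluated at the price $p_k=p_k^w+\lambda_k$. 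I would then set $p^c_k:=p_k^w+\lambda_k^*$ for an optimal dual $\lambda^*$; by the saddle-point characterization of optimality, any optimal primal solution $\phi_{LQ}(r)$ maximizes $L(\cdot,\lambda^*)$ subject to the remaining (individual) constraints, and by separability this maximization decouples agent by agent. Hence $\phi_{LQ}^i(r)$ solves agent $i$'s problem~(\ref{priceresponse}) at price $p^c$, i.e.\ $\phi_{LQ}^i(r)\in\mu^i(p^c,r^i)$; under the standing assumptions of the linear-quadratic setting that make $\mu^i$ a single-valued map, this yields $\phi_{LQ}^i(r)=\mu^i(p^c,r^i)$ for all $i$, which is exactly uniform-price implementability.

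The step I expect to be most delicate is the duality argument: I must guarantee not merely equality of optimal values but an actual saddle point, so that the primal optimizer genuinely maximizes $L(\cdot,\lambda^*)$ over the non-relaxed constraints. This is why I would present~(\ref{socialchoice1}) in the substituted form as a finite-dimensional concave quadratic program with purely affine constraints, a setting in which attainment of $\lambda^*$ and the saddle-point property are classical. A minor point to address alongside is that $\lambda_k^*$ vanishes whenever $\sum_i a_k^i<D_k$ (so $p^c_k=p^w_k$) and is positive when the cap binds, which both keeps the construction consistent and gives the price its natural interpretation as wholesale-plus-scarcity pricing.
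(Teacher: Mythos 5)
Your proposal is correct and follows essentially the same route as the paper: the paper also identifies the uniform price as $p^c_k=p^w_k+\xi_k$, where $\xi_k$ is the multiplier of the coupling constraint $\sum_i a_k^i\le D_k$, and verifies that the socially optimal allocation satisfies the (necessary and sufficient) first-order conditions of each agent's convex price-response problem at that price. Your partial-Lagrangian/saddle-point packaging is just a slightly more formal rendering of the same KKT-matching argument, including the complementary-slackness interpretation of $p^c_k$ as wholesale-plus-scarcity pricing.
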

\begin{proof}
First, under a given price $p^c$, the optimality conditions for the price-response problem (\ref{priceresponse}) are as follows:
\begin{equation}
\label{priceresponse1}
-\sum_{k=1}^K \frac{\partial V_k^i(x_k^i)}{\partial a_k^i}\Big|_{a_k^i=\mu_k^i(p^c,r^i)}+p_k^c=0,
\end{equation}
On the other hand, the social welfare maximization problem (\ref{socialchoice1}) is convex. The necessary and sufficient optimality condition is that there exists a unique multiplier $\xi^k$, such that:
\begin{equation}
\label{team}
-\sum_{k=1}^K \frac{\partial V_k^i(x_k^i)}{\partial a_k^i}\Big|_{a_k^i=\phi_{LQ}^{i,k}(r)}+p^w_k+\xi_k=0,
\end{equation}
where $\phi_{LQ}^{i,k}(r)$ is the allocation to the $i$th agent at time $k$, and $\xi_k$ is non-negative and satisfies 
$\xi_k(\sum_{i=1}^N \phi_{LQ}^{i,k}(r)-D)=0$ for all $k=1,\ldots,K$.
Let $p^c_k=p_k^w+\xi_k$. It can be verified that $\phi_{LQ}^i(r)=\mu^i(p^c,r^i)$ is a solution to (\ref{priceresponse1}). Since the price-response problem is convex, optimality condition (\ref{priceresponse1}) is both necessary and sufficient. This indicates that $p^c_k=p_k^w+\xi_k$ can implement the social choice function, which completes the proof. 
\end{proof}
The result of Proposition \ref{uniformim} shows that there exists a price to implement the social choice function $\phi_{LQ}(\cdot)$. In addition to this, we can also derive the conditions for this price, which can be summarized as the following corollary:
\begin{corollary}
The price $p^c$ satisfies $\phi_{LQ}^i(r)=\mu^i(p_c,r^i)$ for all $i=1,\ldots,N$ if and only if the following condition holds for all $k=1,\ldots,K$:
\begin{align}
\begin{cases}
\tilde{p}^c_k=p^w_k  & \text{ if } \sum_i^N \mu_k^i(p^c,r^i)< D_k\\
\tilde{p}^c_k\geq p^w_k  & \text{ if } \sum_i^N\mu_k^i(p^c,r^i)=D_k
\end{cases}
\label{eq:condition1}
\end{align}
\end{corollary}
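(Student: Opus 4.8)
The plan is to prove the corollary by exploiting the optimality characterizations already established in Proposition \ref{uniformim}. Recall that the price-response problem (\ref{priceresponse}) is strictly convex, so the first-order condition (\ref{priceresponse1}) is both necessary and sufficient: a price $p^c$ satisfies $\phi_{LQ}^i(r)=\mu^i(p^c,r^i)$ for all $i$ if and only if, for each $i$ and each $k$, the gradient of the valuation evaluated at the social-choice allocation equals $p_k^c$. On the other hand, the social welfare problem (\ref{socialchoice1}) is convex with the linear cost $\sigma_k(\cdot)=p_k^w(\cdot)$, and its KKT conditions (\ref{team}) say precisely that $\phi_{LQ}^{i,k}(r)$ is characterized by the same gradient equalling $p_k^w+\xi_k$, where $\xi_k\ge 0$ is the unique multiplier of the peak constraint satisfying complementary slackness $\xi_k\big(\sum_i \phi_{LQ}^{i,k}(r)-D_k\big)=0$. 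Comparing the two characterizations term by term, $p^c$ implements the social choice function exactly when $p_k^c=p_k^w+\xi_k$ for all $k$; writing $\tilde p_k^c$ for this candidate, the claim reduces to translating the complementary-slackness condition on $\xi_k$ into the two-case statement (\ref{eq:condition1}).

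The argument then splits into the two directions of the ``if and only if.'' For the ``only if'' direction, suppose $p^c$ implements $\phi_{LQ}(r)$; then by the matching of first-order conditions $\tilde p_k^c = p_k^w+\xi_k$ with $\xi_k\ge 0$, so immediately $\tilde p_k^c\ge p_k^w$; and if the peak constraint is slack, $\sum_i \mu_k^i(p^c,r^i)=\sum_i\phi_{LQ}^{i,k}(r)<D_k$, complementary slackness forces $\xi_k=0$, hence $\tilde p_k^c=p_k^w$. For the ``if'' direction, suppose (\ref{eq:condition1}) holds. Define $\xi_k:=\tilde p_k^c-p_k^w$; by the first case $\xi_k=0$ when the constraint is slack, by the second case $\xi_k\ge 0$ when it is tight, and trivially $\xi_k\ge 0$ and complementary slackness $\xi_k\big(\sum_i\mu_k^i(p^c,r^i)-D_k\big)=0$ hold in both cases. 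Then $(\{\mu^i(p^c,r^i)\}_i,\{\xi_k\}_k)$ satisfies the full KKT system (\ref{team}) together with primal feasibility $\sum_i\mu_k^i(p^c,r^i)\le D_k$; since (\ref{socialchoice1}) is convex these conditions are sufficient for optimality, so $\mu^i(p^c,r^i)=\phi_{LQ}^{i,k}(r)$ by uniqueness of the solution, i.e.\ $p^c$ implements the social choice function.

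One minor technical point to handle carefully is the quantifier on $\xi_k$: the statement (\ref{eq:condition1}) describes $\tilde p_k^c$ as a given object, but in the ``if'' direction we must be sure that primal feasibility $\sum_i\mu_k^i(p^c,r^i)\le D_k$ actually holds so that the KKT conditions are applicable; this should be read as part of the hypothesis (the condition is phrased in terms of $\mu_k^i(p^c,r^i)$ being either below or equal to $D_k$), so invoking it is legitimate. I also need to keep straight that $\tilde p_k^c$ is simply notation for $p_k^w+\xi_k$ as in the proof of Proposition \ref{uniformim}, so that the corollary is really a restatement of the complementary-slackness conditions on the multiplier.

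I expect the main obstacle to be purely expository rather than mathematical: the corollary is essentially an unpacking of the KKT complementary-slackness condition already derived in the proof of Proposition \ref{uniformim}, so the real work is in stating cleanly which quantities are given and which are constructed, and in making the equivalence ``$p^c$ implements $\phi_{LQ}$'' $\iff$ ``$\exists\,\xi_k\ge0$ with the KKT structure'' $\iff$ ``(\ref{eq:condition1}) holds'' fully rigorous without re-deriving convex-duality facts. No genuinely hard estimate appears; the convexity and strict convexity needed are exactly those already assumed in the problem setup.
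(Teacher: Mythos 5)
Your proof is correct and follows essentially the same route the paper intends: the paper gives no separate argument, stating only that the corollary ``can be directly derived from the proof of Proposition \ref{uniformim}'', and your unpacking of that proof --- matching the price-response first-order condition with the KKT stationarity and complementary-slackness conditions of the social welfare problem, so that $p^c_k=p^w_k+\xi_k$ and the two cases of (\ref{eq:condition1}) are exactly the complementary-slackness dichotomy --- is precisely that derivation, with both directions of the equivalence made explicit.
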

This result can be directly derived from the proof of Proposition \ref{uniformim}. We comment that the result holds when $\sigma_k(\cdot)$ is nonlinear. In this case, $p^w_k$ is the marginal cost for obtaining certain amount of resources.

\subsection{Incentive Compatibility}
Now we verify the incentive compatibility of the proposed mechanism based on Proposition \ref{incentivecomp}. First, we observe that under the linear quadratic setup, utility function $U_k^i$ is Lipschitz continuous with respect to both $a_k^i$ and $p_k^i$ for a bounded domain. To verify the Lipschitz continuity of the function $\mu^i$, we observe that the analytic expression of $\mu^i$ can be derived for the linear quadratic case: 
\begin{align}
\begin{cases}
\mu_k^i(p^c,r^i)=&\hspace{-0.3cm} \theta_k^ip_k^c+\theta_{k,k-1}^ip_{k-1}^c+\theta^i_{k,k+1}p^c_{k+1}, \quad \forall k\neq 1 \nonumber \\
\mu^i_1(p^c,r^i)=&\hspace{-0.3cm} \theta^i_1p^c_1+\theta^i_2p^c_2 
\end{cases}
\end{align}
where $\theta^i_1=\frac{A^i}{2\beta^i_1B_1^iB_1^i}$, $\theta^i_2=-\frac{A^i}{2\beta^i_{1}B_1^iB_2^I}$, and the coefficients for $k\neq 1$ are defined as follows: 
\begin{align}
\begin{cases}
\theta^i_k=\dfrac{1}{2\beta^i_kB^i_kB^i_k}+\dfrac{A^iA^i}{2\beta^i_{k-1}B^i_kB^i_k} \\
\theta^i_{k,k-1}=-\dfrac{A^i}{2\beta^i_{k-1}B^i_{k-1}B_k^i}\\
\theta^i_{k,k+1}=-\dfrac{A^i}{2\beta^i_{k}B^i_{k}B^i_{k+1}},
\end{cases}
\label{eq:coefficients}
\end{align}
Since the function $\mu^i$ is linear with respect to $p^c$, it is Lipschitz continuous. 
In addition to Lipschitz continuity, we can also show that condition (\ref{condition1}) holds for the linear quadratic case. 
\begin{proposition}
In the linear quadratic setup, there exists $C_4>0$ (not depending on $N$) for the proposed mechanism $\Gamma_c$ such that:
\begin{equation}
\label{condition2}
|g_{k,p}(r^i,r^{-i})-g_{k,p}(\theta^i,r^{-i})|\leq C_4/N,\quad \forall r^i\in M^i, r^{-i}\in M^{-i},
\end{equation}
\end{proposition}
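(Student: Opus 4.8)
The plan is to bound the change in the market clearing price $g_{k,p}$ caused by a single agent's deviation from $r^i$ to $\theta^i$ by exploiting the linear structure of the price‑response maps $\mu^j$ and the optimality (KKT) conditions derived in the proof of Proposition~\ref{uniformim}. From the corollary, the clearing price at time $k$ is $p^c_k = p^w_k + \xi_k$, where $\xi_k\ge 0$ is the multiplier for the peak constraint $\sum_j \mu_k^j(p^c,r^j)\le D_k$; so it suffices to bound $|\xi_k(r^i,r^{-i})-\xi_k(\theta^i,r^{-i})|$. First I would write the aggregate demand $\bar\mu_k(p^c,r) = \sum_{j=1}^N \mu_k^j(p^c,r^j)$, which by the analytic expressions in \eqref{eq:coefficients} is an affine function of the price vector $p^c$ with slope $\sum_j (\theta^j_k + \theta^j_{k,k-1}+\theta^j_{k,k+1})$ (suitably indexed at the endpoints). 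Since the bounds $0<\underline A\le A^i\le\bar A$, $\underline B\le B^i\le\bar B<0$, and $\beta^i_k<0$ hold uniformly, each individual coefficient $\theta^j_{\bullet}$ is bounded in absolute value by a constant, and the aggregate slope is bounded below (in magnitude) by a constant times $N$ and above by a constant times $N$ — this linear‑in‑$N$ scaling of the aggregate price sensitivity is the crux.

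Next I would compare the two market‑clearing problems, one with agent $i$ reporting $r^i$ and one with $\theta^i$, all other reports fixed at $r^{-i}$. On any time slot $k$ where the peak constraint is inactive in both problems, $\xi_k=0$ in both and the price is unchanged, so \eqref{condition2} is trivial there. On a slot where the constraint binds in at least one of the two problems, the binding price $p^c_k$ is pinned down (together with neighboring components through the tridiagonal coupling in \eqref{eq:coefficients}) by the equation $\bar\mu_k(p^c,r)=D_k$. Replacing $\mu_k^i(\cdot,r^i)$ by $\mu_k^i(\cdot,\theta^i)$ perturbs the left‑hand side of this equation by at most $\max_{p}|\mu_k^i(p,r^i)-\mu_k^i(p,\theta^i)|$, which is bounded by a constant $C$ not depending on $N$ (one agent's demand lives in a bounded range, since $\Omega_k^i$ is compact and the coefficients are bounded). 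Dividing this $O(1)$ perturbation of the aggregate equation by the $\Theta(N)$ aggregate price sensitivity yields $|p^c_k(r^i,r^{-i})-p^c_k(\theta^i,r^{-i})| \le C_4/N$. To make the tridiagonal coupling rigorous I would assemble the active‑set optimality system as a linear system $M p^c = b$ where $M$ is (on the active rows) the aggregate sensitivity matrix — diagonally dominant with diagonal entries of order $N$ and off‑diagonal entries of order $N$ but with a uniformly bounded ratio — invoke a uniform bound on $\|M^{-1}\|_\infty$ of order $1/N$, and note the right‑hand side changes by $O(1)$; then $C_4$ can be taken as that bound times the per‑agent demand range.

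The main obstacle I anticipate is handling the active set carefully: the set of binding time slots may differ between the two instances, and the multiplier $\xi_k$ is only piecewise‑smooth in the reports, so the naive "differentiate the KKT system" argument is not immediately valid. I would deal with this by a homotopy/continuity argument — move along the segment from $r^i$ to $\theta^i$, partition it at the finitely many points where the active set changes, apply the linear‑system estimate on each piece where the active set is constant, and sum the increments (noting that at an active‑set transition the price is continuous because $\mu^j$ is continuous in $p$ and the solution of the strictly convex aggregate problem is continuous in the reports). A secondary technical point is ensuring the aggregate sensitivity is bounded \emph{away from zero} uniformly: this uses $\bar B<0$ (so all $B^i$ have the same sign and no cancellation degrades the leading $1/(2\beta^i_kB^i_kB^i_k)$ terms, which are all of one sign since $\beta^i_k<0$), guaranteeing the diagonal of $M$ grows like $N$ rather than collapsing. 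Once these two points are in hand, the constant $C_4$ is explicit in $\underline A,\bar A,\underline B,\bar B$, the bounds on $\beta^i_k$, and the diameter of $\Omega_k^i$, and is manifestly independent of $N$.
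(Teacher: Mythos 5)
Your proposal is essentially the paper's own argument: the paper likewise differentiates the market-clearing optimality condition, written as a piecewise map $\nu(p,r^i)=0$ covering the binding and non-binding cases, observes that the entries of $J\nu_p$ scale like $N$ while those of $J\nu_r$ are $O(1)$, concludes $\big|\partial p_k/\partial r^i_l\big|\leq C_7/N$ from $Jp_r=-(J\nu_p)^{-1}J\nu_r$, and integrates along the segment from $\theta^i$ to $r^i$ — exactly your aggregate-sensitivity-of-order-$N$ versus $O(1)$ single-agent perturbation estimate, with your homotopy/active-set bookkeeping being, if anything, more careful than the paper's treatment of the piecewise structure. The one step you should not treat as automatic is the uniform $O(1/N)$ bound on the inverse aggregate sensitivity matrix: your diagonal-dominance claim does not follow from $\underline{A}\leq A^i\leq\bar{A}$, $\underline{B}\leq B^i\leq\bar{B}<0$, $\beta^i_k<0$ alone (the off-diagonal terms in \eqref{eq:coefficients} can rival the diagonal unless the parameters are further restricted), but this is precisely the point the paper itself only asserts (``due to the particular structure of $J\nu_p$''), so your proof matches the paper's route and its level of rigor at that spot.
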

\begin{proof}
To prove this result, we need to study how agent bid affects the price. The price and the agent bids satisfy (\ref{eq:condition1}). This optimality condition can be transformed into the following equation $\nu(p,r^i)=0$, where $p$ is the root of the mapping $\nu: P\times M^i\rightarrow P$, and this mapping can be defined as follows: 
\begin{align}
\label{mapping}
\nu_k(p,r^i)=\begin{cases}
\alpha_k^1(\sum_{i=1}^N\mu_k^i(p,r^i)-D_k), \quad \text{if } p_k\geq \sigma'(D_k)  \\
\alpha_k^2(p_k-\sigma'(\sum_{i=1}^N\mu_k^i(p,r^i))), \quad\text{otherwise}
\end{cases}
\end{align} 
where $\nu_k$ is the $k$th element of vector $\nu$. 

Let $J\nu_p$ and $J\nu_r$ be the Jacobian matrix of $\nu$ with respect to $p$ and $r^i$, respective. If $J\nu_p$ is invertible,  then $p$ is a function of $r^i$ (if not, $r^i$ does not affect $p$, and the result of the proposition holds). Let $Jp_r$ be the Jacobian matrix of $p$ with respect to $r^i$ and we have $Jp_r=-(J\nu_p)^{-1}J\nu_r$, whenever $J\nu_p$ exists. Based on (\ref{eq:coefficients}), since the report $r^i $ are bounded, each entry of the matrix $J\nu_r$ is bounded, and there exists $C_5>0$ and $C_6>0$ such that $C_5N\leq \big|\dfrac{\partial \nu_k}{\partial p_{k'}}\big|\leq C_6N$ for all $k$ and $k'$ where $p_k\geq \sigma'(D)$. Due to the particular structure of $J\nu_p$, we can derive the analytic form for $(J\nu_p)^{-1}$ and verify that each entry of this matrix is also bounded. Therefore, there exists $C_7$ not depending on $N$ such that $\big|\dfrac{\partial p_k}{\partial r_l^i}\big|\leq C_7/N$. Since $r^i$  is bounded, this indicates that there exists $m>0$ not depending on $N$, such that:
\begin{equation}
|g_{k,p}(r^i,r^{-i})-g_{k,p}(\theta^i,r^{-i})| \leq \sum_{l=1}^L \big| \dfrac{\partial p_k}{\partial r_l^i} dr_l^i\big| \leq \dfrac{LmC_7}{N},
\end{equation} 
where $L$ is the dimension of the user bid $r^i$. In this case, $C_4=LmC_7$, which completes the proof. 
\end{proof}
\vspace{0.1cm}

To this point, all the conditions for Proposition \ref{incentivecomp} are verified. Therefore, the proposed mechanism is incentive compatible. Furthermore, since the social choice function $\phi_{LQ}$ is uniform-price implementable, we conclude that the mechanism $\Gamma_c$ implements the social choice function $\phi_{LQ}$ in $\epsilon$-dominant strategy equilibrium.

\section{conclusion}
This paper presented a dynamic mechanism design approach for the coordinator of a large population agents with private information. A mechanism was proposed to motivate self-interested agents to achieve group objectives. We showed that the proposed mechanism is incentive compatible, and it implements the desired social choice function in $\epsilon$-dominant strategy equilibrium. We also applied the proposed mechanism to linear quadratic case to show its efficacy. Future work includes developing the mechanism design framework for more general dynamic models, and applying the results to practical applications such as thermostatically controlled loads, plug-in electric vehicles, dyers, washers, among others.

\bibliographystyle{unsrt}
\bibliography{Allerton2015}

\begin{thebibliography}{10}

\bibitem{mas1995microeconomic}
A.~Mas-Colell, Mi.~D. Whinston, J.~R. Green, et~al.
\newblock {\em Microeconomic theory}, volume~1.
\newblock Oxford university press New York, 1995.

\bibitem{zhangtruthful}
L.~Zhang, S.~Ren, C.~Wu, and Z.~Li.
\newblock A truthful incentive mechanism for emergency demand response in
  colocation data centers.
\newblock Technical report, Technical report, CS, University of Calgary,
  http://pages. cpsc. ucalgary. ca/\~{} linqzhan/truth-dr. pdf.

\bibitem{samadi2012advanced}
P.~Samadi, H.~M. Rad, R.~Schober, and V.~WS. Wong.
\newblock Advanced demand side management for the future smart grid using
  mechanism design.
\newblock {\em Smart Grid, IEEE Transactions on}, 3(3):1170--1180, 2012.

\bibitem{mcsherry2007mechanism}
F.~McSherry and K.~Talwar.
\newblock Mechanism design via differential privacy.
\newblock In {\em 48th Annual IEEE Symposium on Foundations of Computer
  Science}, pages 94--103. IEEE, 2007.

\bibitem{nissim2012approximately}
K.~Nissim, R.~Smorodinsky, and M.~Tennenholtz.
\newblock Approximately optimal mechanism design via differential privacy.
\newblock In {\em Proceedings of the 3rd innovations in theoretical computer
  science conference}, pages 203--213. ACM, 2012.

\bibitem{kearns2014mechanism}
M.~Kearns, M.~Pai, A.~Roth, and J.~Ullman.
\newblock Mechanism design in large games: Incentives and privacy.
\newblock In {\em Proceedings of the 5th conference on Innovations in
  theoretical computer science}, pages 403--410. ACM, 2014.

\bibitem{azevedo2013strategy}
E.~M. Azevedo and E.~B. Budish.
\newblock Strategy-proofness in the large.
\newblock {\em Chicago Booth Research Paper}, (13-35), 2013.

\bibitem{bergemann2010dynamic}
D.~Bergemann and J.~V{\"a}lim{\"a}ki.
\newblock The dynamic pivot mechanism.
\newblock {\em Econometrica}, 78(2):771--789, 2010.

\bibitem{balandat2013dynamic}
M.~Balandat and C.~J. Tomlin.
\newblock A dynamic vcg mechanism for random allocation spaces.
\newblock In {\em 51st Annual Allerton Conference on Communication, Control,
  and Computing}, pages 925--931. IEEE, 2013.

\bibitem{athey2013efficient}
S.~Athey and I.~Segal.
\newblock An efficient dynamic mechanism.
\newblock {\em Econometrica}, 81(6):2463--2485, 2013.

\bibitem{pavan2014dynamic}
A.~Pavan, I.~Segal, and J.~Toikka.
\newblock Dynamic mechanism design: A myersonian approach.
\newblock {\em The Econometric Society}, 2014.

\bibitem{pavan2009dynamic}
A.~Pavan, I.~R. Segal, and J.~Toikka.
\newblock Dynamic mechanism design: Incentive compatibility, profit
  maximization and information disclosure.
\newblock {\em Profit Maximization and Information Disclosure}, 2009.

\bibitem{mierendorff2011optimal}
K.~Mierendorff.
\newblock Optimal dynamic mechanism design with deadlines.
\newblock {\em Unpublished manuscript, University of Zurich}, 2011.

\bibitem{limarket}
S.~Li, W.~Zhang, J.~Lian, and K.~Kalsi.
\newblock Market-based coordination of thermostatically controlled loads{- Part
  I}: A mechanism design formulation.
\newblock {\em to appear in IEEE Transactions on Power Systems}, 2015.

\bibitem{bitar2014deadline}
E.~Bitar and Y.~Xu.
\newblock Deadline differentiated pricing of delay-tolerant demand.
\newblock {\em arXiv preprint arXiv:1407.1601}, 2014.

\bibitem{papadaskalopoulos2013decentralized}
D.~Papadaskalopoulos and G.~Strbac.
\newblock Decentralized participation of flexible demand in electricity
  markets¡xpart i: Market mechanism.
\newblock {\em IEEE Transactions on Power Systems}, 28(4):3658--3666, 2013.

\bibitem{kohansal2014price}
M.~Kohansal and H.~M. Rad.
\newblock Price-maker economic bidding in two-settlement pool-based markets:
  The case of time-shiftable loads.
\newblock {\em IEEE Transactions on Power Systesms}, 2014.

\bibitem{huang2012social}
M.~Huang, P.~E. Caines, and R.~P. Malham{\'e}.
\newblock Social optima in mean field lqg control: centralized and
  decentralized strategies.
\newblock {\em IEEE Transactions on Automatic Control}, 57(7):1736--1751, 2012.

\bibitem{grammatico2015mean}
S.~Grammatico, B.~Gentile, F.~Parise, and J.~Lygeros.
\newblock A mean field control approach for demand side management of large
  populations of thermostatically controlled loads.
\newblock In {\em European Control Conference (submitted)}, 2015.

\bibitem{parise2015constrained}
F.~Parise, S.~Grammatico, M.~Colombino, and J.~Lygeros.
\newblock On constrained mean field control for large populations of
  heterogeneous agents: Decentralized convergence to nash equilibria.
\newblock In {\em Proc. of the IEEE European Control Conference, Linz,
  Austria}, 2015.

\bibitem{grammatico2014decentralized}
S.~Grammatico, F.~Parise, M.~Colombino, and J.~Lygeros.
\newblock Decentralized convergence to nash equilibria in constrained mean
  field control.
\newblock {\em arXiv preprint arXiv:1410.4421}, 2014.

\bibitem{chen2010two}
L.~Chen, N.~Li, S.~H. Low, and J.~Doyle.
\newblock Two market models for demand response in power networks.
\newblock {\em IEEE SmartGridComm}, 10:397--402, 2010.

\bibitem{xudemand}
Y.~Xu, N.~Li, and S.~H. Low.
\newblock Demand response with capacity constrained supply function bidding.
\newblock {\em to appear in IEEE Transactions on Power Systems}, 2015.

\bibitem{wu2012vehicle}
C.~Wu, H.~M. Rad, and J.~Huang.
\newblock Vehicle-to-aggregator interaction game.
\newblock {\em IEEE Transactions on Smart Grid}, 3(1):434--442, 2012.

\bibitem{furuhata2015online}
M.~Furuhata, K.~Daniel, S.~Koenig, F.~Ordonez, M.~Dessouky, M.~Brunet,
  L.~Cohen, and X.~Wang.
\newblock Online cost-sharing mechanism design for demand-responsive transport
  systems.
\newblock {\em Intelligent Transportation Systems, IEEE Transactions on},
  16(2):692--707, 2015.

\bibitem{hempel2015inverse}
A.~B. Hempel, P.~J. Goulart, and J.~Lygeros.
\newblock Inverse parametric optimization with an application to hybrid system
  control.
\newblock {\em IEEE Transactions on Automatic Control}, 60(4):1064--1069, 2015.

\bibitem{basar1995dynamic}
T.~Basar and G.~J. Olsder.
\newblock {\em Dynamic noncooperative game theory}, volume 200.
\newblock SIAM, 1995.

\bibitem{bacsar1980team}
T.~Ba{\c{s}}ar and G.~J. Olsder.
\newblock Team-optimal closed-loop stackelberg strategies in hierarchical
  control problems.
\newblock {\em Automatica}, 16(4):409--414, 1980.

\end{thebibliography}

\end{document}